\theoremstyle{plain}
\newtheorem{thm}{Theorem}[section]
\newtheorem{cor}[thm]{Corollary}
\theoremstyle{remark}
\newtheorem{rem}{Remark}[section]
\newcommand{\Econd}[2]{\mathbb{E}\left[\left.#1\right|#2\right]}
\newcommand{\E}[1]{\mathbb{E}\left[#1\right]}
\newcommand{\Ex}[2]{\mathbb{E}^{#1}\left[#2\right]}
\newcommand{\RR}{\mathbb{R}}
\newcommand{\QQ}{\mathbb{Q}}
\newcommand{\PP}{\mathbb{P}}
\newcommand{\EE}{\mathbb{E}}
\newcommand{\cF}{\mathcal{F}}
\newcommand{\cH}{\mathcal{H}}
\newcommand{\set}[1]{\left\{#1\right\}}
\newcommand{\norm}[1]{\left\lVert#1\right\rVert}
\newcommand{\sign}[1]{\mathrm{sign}\left(#1\right)}
\newcommand{\scal}[2]{\left\langle#1,#2\right\rangle}
\newcommand{\BSDelta}{\mathsf{Delta}}
\newcommand{\BSVega}{\mathsf{Vega}}
\newcommand{\BSd}{\mathsf{d}}
\newcommand{\MSHE}{\mathrm{MSHE}}
\newcommand\wh[1]{\hstretch{2}{\hat{\hstretch{.5}{#1}}}}
\newcommand{\pd}[2]{\frac{\partial #1}{\partial #2}}
\begin{document}
\pagestyle{empty}

\title{Bartlett's Delta revisited: Variance-optimal hedging in the lognormal SABR and in the rough Bergomi model}
\author{Martin Keller-Ressel}
\affil{TU Dresden, Institute for Mathematical Stochastics, Dresden, 01062, Germany}
\affil{martin.keller-ressel@tu-dresden.de}

\maketitle

\begin{abstract}We derive analytic expressions for the variance-optimal hedging strategy and its mean-square hedging error in the lognormal SABR and in the rough Bergomi model. In the SABR model, we show that the variance-optimal hedging strategy coincides with the Delta adjustment of Bartlett [Wilmott magazine 4/6 (2006)]. We show both mathematically and in simulation that the efficiency of the variance-optimal strategy (in comparison to simple Delta hedging) depends strongly on the leverage parameter $\rho$ and  -- in a weaker sense -- also on the roughness parameter $H$ of the model, and give a precise quantification of this dependency.
\end{abstract}

%\begin{acknowledgements}
%\end{acknowledgements}

\section{Introduction}\label{sec:intro}
Delta-hedging, as the unique strategy to eliminate all hedging risk in the Black-Scholes model, has been one of the pillars of classical finance. By using the option-implied Delta, Delta-hedging is easily adapted to models beyond Black-Scholes. However, it has been well understood that other risk factors, such as changes in volatility (Vega risk), also have to be taken into account in effective hedging strategies. Within the setting of local or stochastic volatility models, the derivation of such adjustments to Delta-hedging has been repeatedly considered in the literature. Crep\'ey, for example, proposes in \cite{crepey2004delta} an adjustment of the form
\begin{equation}\label{eq:adj_Crepey}
\theta_t^{Crepey} = \BSDelta_t + \BSVega_t \pd{\Sigma_t}{K},
\end{equation}
where $\pd{\Sigma_t}{K}$ is the derivative of implied volatility with respect to strike. Hull and White consider in \cite{hull2017optimal} an adjustment of the form
\begin{equation}\label{eq:HW}
\theta_t^{HW} = \BSDelta_t + \frac{\BSVega_t}{S_t \sqrt{T-t}} \left(a + b\,\BSDelta_t + c \,\BSDelta_t^2 \right),
\end{equation}
where the coefficients $a$, $b$ and $c$ are determined by regression. Importantly, they emphasize that their regression approach corresponds to \emph{minimizing the variance} of the hedging error. Here, we also follow this variance-optimality approach, which has originally been introduced and thoroughly explored in a general setting by F\"ollmer and Schweizer \cite{schweizer1984varianten, follmer1988hedging}. We assume that a description of the dynamic evolution of the option-implied volatility $\Sigma$ is available, and we obtain a general expression for the variance-minimizing hedging strategy and its mean-square error in Section~\ref{sec:bg}. We then focus on the application of these theoretical results to optimal hedging in the SABR model \cite{hagan2002managing} and in its extension to `rough volatility' (see \cite{gatheral2018volatility}), the rough Bergomi model introduced in \cite{bayer2016pricing}.\\
Already in \cite{hagan2002managing}, an adjusted Delta hedge for the SABR model is proposed by Hagan-Kumar-Lesniewski-Woodward (HKLW), taking the form
\begin{equation}\label{eq:HKLW_hedge}
\theta_t^{HKLW} = \BSDelta_t + \BSVega_t \frac{\partial \Sigma_t}{\partial S}.
\end{equation}
Aiming to improve the HLKW strategy, Bartlett proposes in \cite{bartlett2006hedging} the adjustment\footnote{All parameters refer to the SABR model as discussed in Section~\ref{sec:SABR} below.}
\begin{equation}\label{eq:Bartlett_hedge}
\theta_t^{Bartlett} = \BSDelta_t + \BSVega_t \left(\frac{\rho}{2 \eta} \frac{\partial \Sigma_t}{\partial \alpha} + \frac{\partial \Sigma_t}{\partial S}\right),
\end{equation}
which is subsequently discussed as `Bartlett's Delta' in \cite{hagan2017bartlett}. In Section~\ref{sec:SABR}, we show mathematically that Bartlett's hedge corresponds exactly to the variance-optimal hedge in the SABR model. Moreover, we derive analytic expressions for the mean-square hedging errors of the classic Delta-hedge, the HLKW-hedge and the Bartlett/variance-optimal hedge, allowing us to compare the strategies on a theoretical basis, in addition to their numerical evaluation. Subsequently, we show in Section~\ref{sec:rough} that our `dynamic implied volatility' framework for variance-optimal hedging can also be applied to rough stochastic volatility models, which are neither Markovian nor semi-martingales. Combining our framework with the implied-volatility approximations of Fukasawa and Gatheral \cite{fukasawa2022rough} for the rough Bergomi model \cite{bayer2016pricing}, we derive analytic expressions for the variance-optimal hedging strategy and its mean-square error. In particular, we discover an interesting interplay between the leverage parameter $\rho$ and the roughness parameter $H$ in determining the efficiency of the variance-optimal strategy. In section~\ref{sec:numerics} we report numerical results for the SABR and the rough Bergomi model, which confirm our theoretical findings. 

\section{Background}\label{sec:bg}
\subsection{Variance-optimal hedging}
Let $S = (S_t)_{t \geq 0}$ be the stock price and $C = (C_t)_{t \in [0,T]}$ be the price of a contingent claim with maturity $T$. For simplicity we set interest rates to zero and we assume that there exists a risk-neutral measure $\QQ$ under which both $S$ and $C$ are square-integrable martingales. Let $\theta$ be a strategy to hedge the payoff $C_T$. Then we can write 
\[C_t = w + \int_0^t \theta_u dS_u+ L_t,\]
where the terms on the right hand denote, respectively: the initial capital $w \in \RR$, the accumulated value of the hedging portfolio and the \emph{pathwise hedging error} $L$. Note the similarities to linear regression: We are regressing (the stochastic process) $C$ onto the hedging portfolios in $S$, with regression coefficient $\theta$, intercept $w$, and residual $L$. In a complete market we can find a perfect hedging strategy $\theta^\text{perf}$ such that the terminal pathwise hedging error $L_T$ is identical to zero and the unique initial capital needed is $w = \EE^\QQ(C_T)$. However, in an incomplete market model, such as the SABR model or other rough and non-rough stochastic volatility models, there is no perfect hedging strategy. Thus, we have to decide on another criterion of optimality to select a suitable strategy $\theta$. Here, we focus on the principle of \emph{variance-optimality}, introduced by \cite{schweizer1984varianten, follmer1988hedging}, which aims to find the initial capital $w$ and the variance-optimal strategy $\theta^{VO}$, which minimizes the risk-neutral\footnote{One can also start from a non-risk-neutral measure $\PP$ and minimize the MSHE under $\PP$. This leads to the rich subject of mean-variance hedging, cf. \cite{schweizer1992mean}, which we do not consider here.} mean-square hedging error (MSHE)
\begin{equation}\label{eq:MSHE}
\MSHE(\theta) = \Ex{\QQ}{L_T^2} = \Ex{\QQ}{\left(C_T -  w + \int_0^T \theta_u dS_u \right)^2}.
\end{equation}
The proper mathematical framework is to assume that $S$ belongs to the Hilbert space (cf. \cite[I.4a]{jacod2013limit})
\[\cH_2 = \set{M = (M_t)_{t \in [0,T]}: \text{$M$ is a martingale with } \E{M_T^2}< \infty}\]
of square integrable continuous martingales with norm $\norm{M}_2^2 = \E{M_T^2}$ and inner product $M \cdot N = \E{M_T N_T}$; and to restrict the hedging strategies to 
\[L_S^2 = \set{\theta \text{ adapted}: \E{\int_0^T \theta_t^2 d\scal{S}{S}_t} < \infty}.\] 
Finally, two martingales $M,N \in \cH_2$ are called \emph{orthogonal in the strong sense}, if their quadratic covariation vanishes, i.e., if $\scal{N}{M} = 0$. If $M_0 N_0 = 0$, then this strong orthogonality implies orthogonality in the Hilbert-space $\cH_2$.  
 The key observations are that 
\begin{itemize}
\item equation \eqref{eq:MSHE} is minimized by the orthogonal projection in $\cH_2$ of the martingale $C = (C_t)_{t \in [0,T]}$ onto the closed subspace spanned by the integrals $\int_0^t \theta_u dS_u$ where $\theta$ ranges through $L_S^2$; and that 
\item the residual  $L$ of this projection must be orthogonal to $S$; even in the strong sense of $\scal{L}{S} = 0$ (cf. \cite{kunita1967square}).
\end{itemize}
Strong orthogonality of $S$ and $L$ implies that 
\[0 = \scal{L}{S}_t = \scal{C}{S}_t + \int_0^t \theta_s d\scal{S}{S}_s\]
for all $t \in [0,T]$. Rearranging yields the 
variance-optimal strategy as
\[\theta_t^{VO} = \frac{d\scal{S}{C}_t}{d\scal{S}{S}_t},\]
where the right-hand side has to be read as the Radon-Nikodym derivative of the finite-variation processes $\scal{S}{C}_t$ and $\scal{S}{S}_t$. Note the similarity to linear regression, where the regression coefficient is obtained as the covariance between the dependent and the independent variable, divided by the variance of the independent variable.\\
Above, and in what follows, we make frequent use of the following calculation rules for quadratic covariations of continuous Ito processes $X,Y$ (see \cite[\S I.4d-e]{jacod2013limit})  of dimensions $m$ and $n$ respectively: 
\begin{enumerate}[{Q}-I.]
\item If $f \in C^2(\RR^m; \RR)$ and $g \in C^2(\RR^n; \RR)$, then $d\scal{f(X)}{g(Y)} = \nabla f(X)^\top d\scal{X}{Y} \nabla g(X)$, \label{eq:Q1}
\item If $\theta \in L_Y^2$, then $d\scal{\int \theta\,dY}{X} = \theta\,d\scal{Y}{X}$ \label{eq:Q2}
\item $\scal{X}{Y} = 0$ if $X$ or $Y$ has finite variation. \label{eq:Q3}
\end{enumerate}

\subsection{Variance-optimal hedging and implied volatility}
We restrict our attention to the case where $C$ is the price process of a call option with maturity $T$ and strike price $K$. To the call-price $C_t$ we can associate its (time-$t$) implied volatility $\Sigma_t$ and rewrite $C_t$ as 
\begin{equation}\label{eq:BS}
C_t = c_\text{BS}(T-t,S_t,\Sigma_t),
\end{equation}
where $c_\text{BS}$ is the Black-Scholes price in dependence on time-to-maturity, underlying and implied volatility. As a smooth function of the martingales $S$ and $C$, the process $\Sigma$ must be a semi-martingale. We introduce the following notation for the Black-Scholes Greeks and related quantities, given $\Sigma$:
\begin{align*}
\BSd_t^{\pm} &= \frac{\log(S_t/K)}{\Sigma_t \sqrt{T-t}} \pm \frac{\Sigma_t \sqrt{T-t}}{2}\\
\BSDelta_t &= \Phi(\BSd_t^+)\\
\BSVega_t &= S_t \phi(\BSd_t^+) \sqrt{T-t}
\end{align*}
Combining \eqref{eq:BS} with the variance-optimal hedging framework from above, we get a representation of the variance-optimal strategy in terms of Delta and Vega:

\begin{thm}\label{thm:VO_generic}
The variance-optimal hedging strategy for a call option $C$ with implied volatility process $\Sigma$ is
\begin{equation}\label{eq:VO_generic}
\theta_t^{VO} = \BSDelta_t   + \BSVega_t \frac{d\scal{\Sigma}{S}_t}{d\scal{S}{S}_t}
\end{equation}
with initial capital $w =\E{C_T}$. The mean-square hedging error of this strategy is 
\begin{equation}\label{eq:VO_MSHE_generic}
\MSHE(\theta^{VO}) = \E{\int_0^T \BSVega_t^2 dQ^\Sigma_t}
\end{equation}
where the `orthogonal vol-of-vol' process $Q^\Sigma$ is given by
\begin{equation}\label{eq:volovol}
dQ^\Sigma_t = d\scal{\Sigma}{\Sigma}_t - \frac{d\scal{\Sigma}{S}_t}{d\scal{S}{S}_t} d\scal{\Sigma}{S}_t.
\end{equation}
\end{thm}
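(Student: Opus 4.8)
The plan is to combine the general formula $\theta_t^{VO} = d\scal{S}{C}_t/d\scal{S}{S}_t$ derived above with the chain rule Q-I, applied to the representation \eqref{eq:BS}, $C_t = c_\text{BS}(T-t,S_t,\Sigma_t)$. First I would record why $\Sigma$ is a continuous semimartingale: $c_\text{BS}$ is smooth and, on $(0,\infty)$, strictly increasing in its volatility argument with $\pd{c_\text{BS}}{\Sigma} = \BSVega_t > 0$, so by the implicit function theorem $\Sigma_t$ is a smooth function of $t$ and of the continuous martingales $S$ and $C$. Viewing then $C_t = f(X_t)$ with $X_t = (T-t,S_t,\Sigma_t)$ a three-dimensional Ito process whose first component has finite variation, Q-I together with Q-III (which annihilates every covariation term carrying the deterministic time argument) gives
\[
d\scal{S}{C}_t = \pd{c_\text{BS}}{S}\,d\scal{S}{S}_t + \pd{c_\text{BS}}{\Sigma}\,d\scal{\Sigma}{S}_t = \BSDelta_t\,d\scal{S}{S}_t + \BSVega_t\,d\scal{\Sigma}{S}_t,
\]
using the classical identifications $\pd{c_\text{BS}}{S} = \Phi(\BSd_t^+) = \BSDelta_t$ and $\pd{c_\text{BS}}{\Sigma} = S_t\phi(\BSd_t^+)\sqrt{T-t} = \BSVega_t$. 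Dividing by $d\scal{S}{S}_t$ gives \eqref{eq:VO_generic}; the Radon--Nikodym derivative $d\scal{\Sigma}{S}_t/d\scal{S}{S}_t$ exists because, by the Kunita--Watanabe inequality, the total variation of $\scal{\Sigma}{S}$ is dominated by $\scal{S}{S}^{1/2}\scal{\Sigma}{\Sigma}^{1/2}$, while $\scal{S}{S}$ and $\scal{\Sigma}{\Sigma}$ are absolutely continuous in $t$.

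For the error, note first that $C$ is a $\QQ$-martingale, so in the decomposition $C_t = w + \int_0^t \theta_u^{VO}\,dS_u + L_t$ one has $\Ex{\QQ}{L_T^2} = (C_0 - w)^2 + \Ex{\QQ}{\scal{L}{L}_T}$, minimised by $w = C_0 = \Ex{\QQ}{C_T}$, whence $L_0 = 0$. Then $dL_t = dC_t - \theta_t^{VO}\,dS_t$, and using the defining relation $\theta_t^{VO}\,d\scal{C}{S}_t = (\theta_t^{VO})^2\,d\scal{S}{S}_t$,
\[
d\scal{L}{L}_t = d\scal{C}{C}_t - 2\theta_t^{VO}\,d\scal{C}{S}_t + (\theta_t^{VO})^2\,d\scal{S}{S}_t = d\scal{C}{C}_t - \frac{\left(d\scal{C}{S}_t\right)^2}{d\scal{S}{S}_t}.
\]
Expanding again by Q-I and Q-III, $d\scal{C}{C}_t = \BSDelta_t^2\,d\scal{S}{S}_t + 2\BSDelta_t\BSVega_t\,d\scal{\Sigma}{S}_t + \BSVega_t^2\,d\scal{\Sigma}{\Sigma}_t$ and $d\scal{C}{S}_t = \BSDelta_t\,d\scal{S}{S}_t + \BSVega_t\,d\scal{\Sigma}{S}_t$; the $\BSDelta_t^2$ and $\BSDelta_t\BSVega_t$ contributions cancel exactly — a Pythagoras identity expressing that $L$ only retains the component of the implied-vol increments orthogonal to $S$ — leaving $d\scal{L}{L}_t = \BSVega_t^2\big(d\scal{\Sigma}{\Sigma}_t - (d\scal{\Sigma}{S}_t)^2/d\scal{S}{S}_t\big) = \BSVega_t^2\,dQ^\Sigma_t$. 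Since $L_0 = 0$, $\MSHE(\theta^{VO}) = \Ex{\QQ}{L_T^2} = \Ex{\QQ}{\scal{L}{L}_T} = \Ex{\QQ}{\int_0^T \BSVega_t^2\,dQ^\Sigma_t}$, which is \eqref{eq:VO_MSHE_generic}.

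The hard part here is not the algebra but the analysis needed to make the orthogonal-projection argument of the previous subsection legitimate in this setting: one must check that $\theta^{VO} \in L_S^2$, that the residual $L$ genuinely belongs to $\cH_2$ (equivalently, that $\int_0^T \BSVega_t^2\,dQ^\Sigma_t$ is $\QQ$-integrable), and that the failure of $c_\text{BS}$ to be $C^2$ as $t \uparrow T$ causes no trouble. At the level of this generic statement these are best imposed as standing assumptions; they are verified concretely for the SABR and rough Bergomi specifications in Sections~\ref{sec:SABR} and~\ref{sec:rough}.
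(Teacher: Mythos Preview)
Your proposal is correct and follows essentially the same route as the paper: apply Q-I to $C_t = c_\text{BS}(T-t,S_t,\Sigma_t)$ to obtain $d\scal{S}{C}_t = \BSDelta_t\,d\scal{S}{S}_t + \BSVega_t\,d\scal{\Sigma}{S}_t$, divide to get $\theta^{VO}$, and then expand $d\scal{L}{L}_t$ via the same chain rule to reach $\BSVega_t^2\,dQ^\Sigma_t$. The only organisational difference is that the paper first writes the $\scal{L}{L}$-expansion for a generic strategy $\theta$ (thereby proving Theorem~\ref{thm:MSHE} simultaneously) and then substitutes $\theta^{VO}$, whereas you go straight to the optimal case using the identity $\theta_t^{VO}\,d\scal{C}{S}_t = (\theta_t^{VO})^2\,d\scal{S}{S}_t$; your additional remarks on the semimartingale property of $\Sigma$, the Kunita--Watanabe justification of the Radon--Nikodym derivative, and the integrability caveats are extra care the paper does not spell out.
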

\begin{rem}
We add some intuition to the nature of the orthogonal vol-of-vol process $Q^\Sigma$. First, define 
\[\Sigma^\perp := \Sigma - \frac{d\scal{\Sigma}{S}}{d\scal{S}{S}}S,\]
which is the component of $\Sigma$ which is orthogonal to $S$, i.e., we have
$d\scal{\Sigma^\perp}{S} = 0$. The process $Q^\Sigma$ can now be written as 
$dQ^\Sigma = d\scal{\Sigma^\perp}{\Sigma^\perp}$, hence `orthogonal vol-of-vol'.
\end{rem}
In addition, we can derive the following formula to evaluate the mean-square hedging error of \emph{any} (not necessarily variance-optimal) hedging strategy:
\begin{thm}\label{thm:MSHE}
Let $\theta$ be a hedging strategy for a call option $C$ with initial capital $w = \Ex{\QQ}{C_T}$ and implied volatility process $\Sigma$. Then the mean-square-hedging error of the strategy is given by
\begin{equation}\label{eq:MSHE_general}
\MSHE(\theta) = \E{\int_0^T \Big((\theta_t - \BSDelta_t)^2 d\scal{S}{S}_t - 2(\theta_t - \BSDelta_t) \BSVega_t d\scal{\Sigma}{S}_t + \BSVega_t^2 d\scal{\Sigma}{\Sigma}_t\Big) dt}
\end{equation}.
\end{thm}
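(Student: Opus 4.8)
The plan is to compute $\MSHE(\theta)=\E{L_T^2}$ directly, where $L_t=C_t-w-\int_0^t\theta_u\,dS_u$, by first reducing it to the predictable quadratic variation of $L$ and then expanding that bracket with the calculus rules Q-I, Q-II and Q-III. The starting observation is that, since $C,S\in\cH_2$ and $\theta\in L_S^2$, the stochastic integral $\int_0^{\cdot}\theta_u\,dS_u$ again lies in $\cH_2$, so $L$ is a square-integrable martingale; moreover $L_0=C_0-w=0$, because the choice $w=\E{C_T}$ together with the martingale property of $C$ forces $w=C_0$. Hence $L^2-\scal{L}{L}$ is a martingale vanishing at $0$, and the It\^o isometry gives $\MSHE(\theta)=\E{L_T^2}=\E{\scal{L}{L}_T}$, so that everything reduces to identifying $\scal{L}{L}$.

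For this, I would first expand $d\scal{L}{L}_t$ by bilinearity of the bracket: the constant $w$ drops out by rule Q-III, and rule Q-II ($d\scal{\int\theta\,dS}{X}=\theta\,d\scal{S}{X}$ for a continuous semimartingale $X$) handles the portfolio term, giving
\[
d\scal{L}{L}_t = d\scal{C}{C}_t - 2\theta_t\,d\scal{C}{S}_t + \theta_t^2\,d\scal{S}{S}_t.
\]
To bring in the implied volatility I would then apply It\^o's formula to $C_t=c_\text{BS}(T-t,S_t,\Sigma_t)$: on the region $\{S>0,\ \Sigma>0,\ T-t>0\}$, where $c_\text{BS}$ is smooth, the martingale part of the semimartingale $C$ has $dS_t$-coefficient $\pd{c_\text{BS}}{S}=\BSDelta_t$ and $d\Sigma_t$-coefficient $\pd{c_\text{BS}}{\Sigma}=\BSVega_t$, all remaining terms being of finite variation. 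Applying rules Q-I and Q-III to $C$, $S$ and $\Sigma$ then yields
\[
d\scal{C}{S}_t = \BSDelta_t\,d\scal{S}{S}_t + \BSVega_t\,d\scal{\Sigma}{S}_t, \qquad d\scal{C}{C}_t = \BSDelta_t^2\,d\scal{S}{S}_t + 2\BSDelta_t\BSVega_t\,d\scal{\Sigma}{S}_t + \BSVega_t^2\,d\scal{\Sigma}{\Sigma}_t.
\]

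Substituting these two identities into the expansion of $d\scal{L}{L}_t$, the $d\scal{S}{S}_t$-terms combine into the perfect square $(\theta_t-\BSDelta_t)^2$ and the $d\scal{\Sigma}{S}_t$-terms into $-2(\theta_t-\BSDelta_t)\BSVega_t$, so that
\[
d\scal{L}{L}_t = (\theta_t-\BSDelta_t)^2\,d\scal{S}{S}_t - 2(\theta_t-\BSDelta_t)\BSVega_t\,d\scal{\Sigma}{S}_t + \BSVega_t^2\,d\scal{\Sigma}{\Sigma}_t;
\]
integrating over $[0,T]$ and taking expectations then gives \eqref{eq:MSHE_general}. I expect the only genuine obstacle to be the technical bookkeeping justifying the two expectation steps — the It\^o isometry for $L$ and the interchange of expectation and time-integration — which amounts to the integrability of the three integrands: for the $d\scal{S}{S}$- and $d\scal{\Sigma}{\Sigma}$-terms this follows from $\theta\in L_S^2$, $C,S\in\cH_2$ and the boundedness of $\BSVega$ (via the identity $\BSVega_t=K\phi(\BSd_t^-)\sqrt{T-t}\le K\sqrt{T/(2\pi)}$), and for the cross term from the Kunita--Watanabe inequality. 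A minor additional point is the behaviour of $c_\text{BS}$ as $t\uparrow T$, which is harmless since $S$ and $C$ are continuous and the bracket terms extend continuously to $t=T$.
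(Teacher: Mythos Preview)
Your proposal is correct and follows exactly the same approach as the paper: reduce $\E{L_T^2}$ to $\E{\scal{L}{L}_T}$, expand the bracket via Q-II to $d\scal{C}{C}-2\theta\,d\scal{C}{S}+\theta^2\,d\scal{S}{S}$, and then apply Q-I to $C=c_\text{BS}(T-t,S,\Sigma)$ to rewrite everything in terms of $\BSDelta$, $\BSVega$ and the brackets of $S$ and $\Sigma$. The additional technical bookkeeping you mention (integrability via $\theta\in L_S^2$, boundedness of $\BSVega$, Kunita--Watanabe for the cross term) is not made explicit in the paper but is a welcome refinement.
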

Applying this formula to the simple Delta-hedge we obtain the following corollary:
\begin{cor}\label{cor:MSHE_Delta}
The mean-square hedging error of the Delta-Hedging strategy $\Delta_t := \BSDelta_t$
is
\begin{equation}\label{eq:Delta_MSHE_generic}
\MSHE(\Delta) = \E{\int_0^T \BSVega_t^2 d\scal{\Sigma}{\Sigma}},
\end{equation}
and the difference to the error of the variance-optimal strategy \eqref{eq:VO_MSHE_generic} is equal to 
\begin{equation}\label{eq:Diff_MSHE_generic}
\MSHE(\Delta)  - \MSHE(\theta^{VO})= \E{\int_0^T \BSVega_t^2 \left(\frac{d\scal{\Sigma}{S}_t}{d\scal{S}{S}_t} \right)^2 d\scal{S}{S}_t} \ge 0.
\end{equation}
\end{cor}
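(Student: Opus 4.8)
The plan is to obtain both assertions as direct specialisations of Theorem~\ref{thm:MSHE}. First I would apply formula~\eqref{eq:MSHE_general} to the pure Delta-hedge $\theta_t = \Delta_t = \BSDelta_t$. Then $\theta_t - \BSDelta_t \equiv 0$, so the first two terms in the integrand of~\eqref{eq:MSHE_general} vanish identically and only $\BSVega_t^2\,d\scal{\Sigma}{\Sigma}_t$ survives; this is exactly~\eqref{eq:Delta_MSHE_generic}.

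For the difference I would simply subtract the variance-optimal error~\eqref{eq:VO_MSHE_generic} from~\eqref{eq:Delta_MSHE_generic} and use the definition~\eqref{eq:volovol} of the orthogonal vol-of-vol process: the two $d\scal{\Sigma}{\Sigma}_t$ contributions cancel, leaving
\[
\MSHE(\Delta) - \MSHE(\theta^{VO}) = \E{\int_0^T \BSVega_t^2\,\frac{d\scal{\Sigma}{S}_t}{d\scal{S}{S}_t}\,d\scal{\Sigma}{S}_t},
\]
and rewriting $\frac{d\scal{\Sigma}{S}_t}{d\scal{S}{S}_t}\,d\scal{\Sigma}{S}_t = \big(\frac{d\scal{\Sigma}{S}_t}{d\scal{S}{S}_t}\big)^2 d\scal{S}{S}_t$ (both equal $\big(d\scal{\Sigma}{S}_t\big)^2/d\scal{S}{S}_t$ in the Radon--Nikodym sense) gives~\eqref{eq:Diff_MSHE_generic}. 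As a consistency check one can instead plug the explicit form $\theta_t^{VO} - \BSDelta_t = \BSVega_t\,\frac{d\scal{\Sigma}{S}_t}{d\scal{S}{S}_t}$ from Theorem~\ref{thm:VO_generic} directly into~\eqref{eq:MSHE_general}; the first and second terms there then combine to $-\BSVega_t^2\,\frac{d\scal{\Sigma}{S}_t}{d\scal{S}{S}_t}\,d\scal{\Sigma}{S}_t$, recovering~\eqref{eq:VO_MSHE_generic} and hence the same difference.

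The sign claim in~\eqref{eq:Diff_MSHE_generic} I would argue pathwise: $\BSVega_t^2 \ge 0$, the squared Radon--Nikodym density $\big(\frac{d\scal{\Sigma}{S}_t}{d\scal{S}{S}_t}\big)^2$ is non-negative, and $t \mapsto \scal{S}{S}_t$ is non-decreasing, so the integrand is a non-negative measure; hence the inner Stieltjes integral is non-negative for every path and so is its expectation. I do not expect any real obstacle, since the corollary is a pure substitution into Theorem~\ref{thm:MSHE}; the only point worth a word is that $\frac{d\scal{\Sigma}{S}_t}{d\scal{S}{S}_t}$ must be a well-defined Radon--Nikodym derivative, i.e.\ that $d\scal{\Sigma}{S}$ is absolutely continuous with respect to $d\scal{S}{S}$, but this is precisely the hypothesis already underlying the definition of $\theta^{VO}$ in Theorem~\ref{thm:VO_generic}, so it can be taken for granted here.
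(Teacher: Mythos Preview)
Your proposal is correct and matches the paper's approach: the paper likewise obtains \eqref{eq:Delta_MSHE_generic} by inserting $\theta_t = \BSDelta_t$ into the general formula \eqref{eq:MSHE_general}, and the difference \eqref{eq:Diff_MSHE_generic} is just the subtraction you describe using the definition \eqref{eq:volovol} of $Q^\Sigma$. Your additional remarks on non-negativity and the Radon--Nikodym hypothesis are fine but more explicit than the paper bothers to be.
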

\begin{rem}
We note that the only difference between expressions \eqref{eq:VO_MSHE_generic} and \eqref{eq:Delta_MSHE_generic} for the hedging error of the variance-optimal and the Delta-hedging strategy is that the integral is taken with respect to \emph{orthogonal} vol-of-vol in the former, and with respect to ordinary vol-of-vol in the latter. 
\end{rem}

\begin{proof}[Proof of Thms. \ref{thm:VO_generic}, \ref{thm:MSHE} and Cor. \ref{cor:MSHE_Delta}]
Using property \textbf{Q-I} from above, we have
\begin{align*}
\theta_t^{VO} &= \frac{d\scal{S}{C}_t}{\scal{S}{S}_t} = \pd{c}{S}(T-t,S_t,\Sigma_t) \frac{d\scal{S}{S}_t}{d\scal{S}{S}_t} + \pd{c}{\Sigma}(T-t,S_t,\Sigma_t) \frac{d\scal{\Sigma}{S}_t}{d\scal{S}{S}_t} = \\
&= \BSDelta_t   + \BSVega_t \frac{d\scal{\Sigma}{S}_t}{d\scal{S}{S}_t}.
\end{align*}
To compute the MSHE of an arbitrary strategy $\theta \in L_S^2$, note that
\[\E{L_T^2} = \E{\scal{L}{L}_T} = \E{\scal{C - w - \int_0 \theta dS}{C - w - \int_0 \theta dS}_T}\]
and calculate, using property \textbf{Q-II}, 
\begin{align}
d\scal{L}{L}_t &= d\scal{C}{C}_t - 2 \scal{C}{\int \theta dS}_t + \scal{\int \theta dS}{\int \theta dS}_t = \notag \\
&= d\scal{C}{C}_t - 2 \theta_t \scal{C}{S}_t + \theta_t^2 \scal{S}{S}_t \label{eq:L_inter}.
\end{align}
Using property \textbf{Q-I} again, we obtain
\begin{align*}
d\scal{L}{L}_t &= \BSDelta_t^2 d\scal{S}{S}_t + 2\,\BSDelta_t \BSVega_t d\scal{S}{\Sigma}_t + \BSVega_t^2 d\scal{\Sigma}{\Sigma}_t - \\
&\phantom{=}-2 \theta_t \BSDelta_t d\scal{S}{S}_t - 2 \theta_t \BSVega_t d\scal{S}{\Sigma}_t + \theta_t^2 d\scal{\Sigma}{\Sigma}_t  = \\
&= (\theta_t - \BSDelta_t)^2 d\scal{S}{S}_t - 2(\theta_t - \BSDelta_t) \BSVega_t d\scal{\Sigma}{S}_t + \BSVega_t^2 d\scal{\Sigma}{\Sigma}_t,
\end{align*}
which gives \eqref{eq:MSHE_general}. Inserting the variance-optimal strategy $\theta^{VO}$ yields
\begin{align*}
d\scal{L}{L}_t &= \BSVega_t^2 \Big( \left(\frac{d\scal{\Sigma}{S}_t}{d\scal{S}{S}_t}\right)^2 d\scal{S}{S}_t - 2 
\frac{d\scal{\Sigma}{S}_t}{d\scal{S}{S}_t} d\scal{\Sigma}{S}_t + d\scal{\Sigma}{\Sigma}_t \Big) = \\
&= \BSVega_t^2 \Big(d\scal{\Sigma}{\Sigma}_t - \frac{d\scal{\Sigma}{S}_t}{d\scal{S}{S}_t} d\scal{\Sigma}{S}_t\Big)
\end{align*}
which gives \eqref{eq:VO_MSHE_generic}. Inserting the Delta-hedging strategy, on the other hand, yields \eqref{eq:Delta_MSHE_generic}.
\end{proof}

To compute the variance-optimal strategy from Theorem~\ref{thm:VO_generic}, we need a tractable description of the dynamic implied volatility process $\Sigma = (\Sigma_t)_{t \in [0,T]}$, which is rarely available in stochastic volatility models. However, in certain models, such as the SABR and the rough Bergomi model, accurate \emph{dynamic approximations} $\hat \Sigma_t$ of $\Sigma_t$ are available, due to \cite{hagan2002managing, balland2006forward, fukasawa2022rough}. This is our key to obtaining explicit approximate variance-optional strategies for these models in the following sections.

\section{The SABR model}\label{sec:SABR}
We consider the SABR model of \cite{hagan2002managing} in its conditionally lognormal form, i.e., with $\beta = 1$, which takes the form
\begin{align*}
dS_t &= S_t \alpha_t dB_t\\
d\alpha_t &= \frac{\eta}{2}\alpha_t dW_t
\end{align*}
where $d\scal{B}{W}_t = \rho dt$ with $\rho \in [-1,1]$. We parameterize vol-of-vol by $\tfrac{\eta}{2}$ for consistency with the rough Bergomi model, as discussed in Section~\ref{sec:rough}. An asymptotically arbitrage-free approximation $\hat \Sigma$ of $\Sigma$ in the lognormal SABR model is given by \cite{fukasawa2022rough} (see also \cite{balland2006forward}) as 
\begin{equation}\label{eq:AAA_SABR}
\hat \Sigma_t = \alpha_t f(Y_t), \qquad Y_t = \frac{\eta}{\alpha_t} \log \left(K/S_t\right),
\end{equation}
where $f$ is given by the famous SABR formula of Hagan et al.\ \cite{hagan2002managing}: 
\begin{equation}\label{eq:f}
f(y) = \frac{y}{g(y)}, \qquad g(y) = - 2 \log\left(\frac{\sqrt{1 + \rho y  + y^2/4} - \rho - y/2}{1 - \rho}\right).
\end{equation}
We denote by $\wh \BSDelta_t$, $\wh \BSVega_t$, $\wh \BSd^\pm$, etc. the Black-Scholes Greeks (and related quantities) evaluated at the approximation $\hat \Sigma$ rather than the exact implied volatility $\Sigma$. 

\subsection{Bartlett's Delta is the variance-optimal strategy}
We consider and compare the following three hedging strategies for the SABR model: 
\begin{itemize}
\item The (classic) Delta hedging strategy $\Delta_t = \Phi(\wh \BSd^+_t)$,  
which uses the Black-Scholes-Delta evaluated at the SABR-implied volatility;
\item The Hagan-Kumar-Lesniewski-Woodward-(HKLW) adjusted Delta strategy \eqref{eq:HKLW_hedge} proposed in \cite{hagan2002managing};
\item Bartlett's adjusted Delta strategy \eqref{eq:Bartlett_hedge} which was proposed in \cite{bartlett2006hedging} (see also \cite{hagan2017bartlett}) as an improvement of the HKLW strategy.
\end{itemize}

Our first result shows that Bartlett's strategy is variance-optimal (up to the approximation error induced by the approximation $\hat \Sigma \approx \Sigma$) , i.e., no other hedging strategy can obtain a smaller hedging error in the mean-square sense.
\begin{thm}\label{thm:SABR}The approximate variance-optimal strategy $\theta^{AVO}$ for the SABR model, obtained by substituting $\hat \Sigma$ for $\Sigma$ in \eqref{eq:VO_generic}, coincides with Bartlett's adjusted Delta strategy \eqref{eq:Bartlett_hedge} and is given by
%\begin{align}
%\theta_t^{AVO} &= \BSDelta_t + \frac{\eta}{2} \frac{\BSVega_t}{S_t} \left[\rho ( f(Y_t) - Y_t f'(Y_t)) - 2 f'(Y_t) \frac{\alpha_t}{U_t}\right] \approx \\
%&\approx \BSDelta_t + \frac{\eta}{2}\phi(d_1) \sqrt{T-t} \left[\rho ( f(Y_t) - Y_t f'(Y_t)) - 2 f'(Y_t) \right].
%\end{align}
\begin{align}
\theta_t^{AVO} &= \theta_t^{Bartlett} = \wh \BSDelta_t + \frac{\eta}{2S_t} \wh \BSVega_t \Big(\rho F_1(Y_t)  + F_2(Y_t)\big) =\notag\\
&= \Phi(\wh \BSd_t^+) + \frac{\eta}{2}\phi(\wh \BSd_t^+) \sqrt{T-t} \big(\rho F_1(Y_t)  + F_2(Y_t)\big),\label{eq:AVO_SABR}
\end{align}
where
\begin{align*}
F_1(y) &= F_1(y;\rho) = f(y) - y f'(y)\\
F_2(y) &=F_2(y;\rho) = -2 f'(y)
\end{align*}
with $f$ given by \eqref{eq:f}.
\end{thm}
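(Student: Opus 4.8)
The plan is to obtain $\theta^{AVO}$ by literally inserting the SABR approximation $\hat\Sigma_t = \alpha_t f(Y_t)$, with $Y_t = \tfrac{\eta}{\alpha_t}\log(K/S_t)$, into the generic variance-optimal formula \eqref{eq:VO_generic}. Since $f$ is $C^2$ on the relevant domain, $\hat\Sigma_t = h(\alpha_t, S_t)$ is a $C^2$ function of the two-dimensional It\^o driver $(\alpha, S)$, so the whole problem reduces to evaluating the single Radon--Nikodym derivative $d\scal{\hat\Sigma}{S}_t / d\scal{S}{S}_t$; the remaining ingredients of \eqref{eq:VO_generic} are just $\wh\BSDelta_t$ and $\wh\BSVega_t$ read off at $\hat\Sigma$.

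First I would compute the (deterministic) partial derivatives of $h$. Writing $y = \tfrac{\eta}{\alpha}\log(K/S)$, one has $\pd{y}{\alpha} = -y/\alpha$ and $\pd{y}{S} = -\eta/(\alpha S)$, whence by the ordinary chain rule
\begin{align*}
\pd{h}{\alpha}(\alpha,S) &= f(y) + \alpha f'(y)\,\pd{y}{\alpha} = f(y) - y f'(y) = F_1(y),\\
\pd{h}{S}(\alpha,S) &= \alpha f'(y)\,\pd{y}{S} = -\frac{\eta}{S}f'(y) = \frac{\eta}{2S}F_2(y).
\end{align*}
From the SABR dynamics $dS_t = S_t\alpha_t\,dB_t$, $d\alpha_t = \tfrac{\eta}{2}\alpha_t\,dW_t$, $d\scal{B}{W}_t = \rho\,dt$, one reads off $d\scal{S}{S}_t = S_t^2\alpha_t^2\,dt$ and $d\scal{\alpha}{S}_t = \tfrac{\eta\rho}{2}S_t\alpha_t^2\,dt$. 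Applying the covariation chain rule \textbf{Q-I} to $\hat\Sigma = h(\alpha,S)$ paired against $S$ then gives $d\scal{\hat\Sigma}{S}_t = \pd{h}{\alpha}\,d\scal{\alpha}{S}_t + \pd{h}{S}\,d\scal{S}{S}_t$, so that
\[\frac{d\scal{\hat\Sigma}{S}_t}{d\scal{S}{S}_t} = F_1(Y_t)\,\frac{\eta\rho}{2S_t} + \frac{\eta}{2S_t}F_2(Y_t) = \frac{\eta}{2S_t}\left(\rho F_1(Y_t) + F_2(Y_t)\right).\]
Substituting this, together with $\wh\BSVega_t = S_t\phi(\wh\BSd_t^+)\sqrt{T-t}$ and $\wh\BSDelta_t = \Phi(\wh\BSd_t^+)$, into \eqref{eq:VO_generic} produces exactly \eqref{eq:AVO_SABR}.

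It then remains to identify \eqref{eq:AVO_SABR} with Bartlett's adjusted Delta. For this I would observe that the two correction terms obtained above are precisely $F_1(Y_t) = \pd{\hat\Sigma_t}{\alpha}$ and $\tfrac{\eta}{2S_t}F_2(Y_t) = \pd{\hat\Sigma_t}{S}$, so that $\theta_t^{AVO} = \wh\BSDelta_t + \wh\BSVega_t\left(\tfrac{\rho\eta}{2S_t}\pd{\hat\Sigma_t}{\alpha} + \pd{\hat\Sigma_t}{S}\right)$, which has the form of \eqref{eq:Bartlett_hedge}; the prefactor $\tfrac{\rho\eta}{2S_t}$ multiplying the $\partial_\alpha$-term is exactly the regression slope $d\scal{\alpha}{S}_t / d\scal{S}{S}_t$ of $\alpha$ on $S$ --- the expected co-movement of the volatility when $S$ moves --- which is the financial content of Bartlett's correction. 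I do not expect a genuine obstacle here: the whole argument is a short computation. The one point requiring real care is the correct use of the two-dimensional covariation chain rule \textbf{Q-I} for the correlated driver $(\alpha, S)$, together with the chain-rule bookkeeping for the joint dependence of $Y_t$ on both $\alpha_t$ and $S_t$; keeping the cross-variation term $d\scal{\alpha}{S}_t$ --- rather than dropping it, as a naive ``$S$-only'' differentiation of $\hat\Sigma$ would --- is precisely what makes the $\rho F_1$ contribution appear.
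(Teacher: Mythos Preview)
Your proposal is correct and follows essentially the same route as the paper: compute the partial derivatives $\partial_\alpha\hat\Sigma = F_1(Y)$ and $\partial_S\hat\Sigma = \tfrac{\eta}{2S}F_2(Y)$, use the SABR covariations $d\scal{S}{S}$ and $d\scal{\alpha}{S}$ to evaluate $d\scal{\hat\Sigma}{S}/d\scal{S}{S}$, and insert into \eqref{eq:VO_generic}. The only cosmetic difference is that the paper first writes out $d\hat\Sigma_t$ via It\^o's formula and then brackets against $S$, whereas you apply the covariation chain rule \textbf{Q-I} directly; the two computations are equivalent, and your identification of the $\rho F_1$ term as arising from the regression slope $d\scal{\alpha}{S}/d\scal{S}{S}$ is exactly the paper's Remark~(a).
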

\begin{rem}
\begin{enumerate}[(a)]
\item Note that $F_1(Y_t) = \pd{\Sigma_t}{\alpha}$ and $F_2(Y_t) = \frac{2S_t}{\eta}\pd{\Sigma_t}{S}$, hence \eqref{eq:AVO_SABR} is the same as Bartlett's Delta \eqref{eq:Bartlett_hedge}.
\item For implementations, the following representations of $F_1$ and $F_2$ are useful:
\begin{align*}
F_1(y) &= f(y)^2 / \sqrt{1 + \rho y  + y^2/4}\\
F_2(y) &= \tfrac{2}{y} \left(F_1(y) - f(y)\right).
\end{align*}
\end{enumerate}
\end{rem}

\begin{thm}\label{thm:SABR_MSHE}
The approximate mean-squared hedging errors for the Bartlett/variance-optimal strategy, the Hagan-Kumar-Lesniewski-Woodward strategy and the simple Delta strategy are 
\begin{align}\label{eq:AVO_MSHE_SABR}
\wh \MSHE(\theta^{AVO}) &= \frac{\eta^2}{4} ( 1- \rho^2) \int_0^t  \wh \BSVega_s^2\;\alpha_s^2 \;F_1(Y_s)^2 ds,\\
\wh \MSHE(\theta^{HKLW})  &= \frac{\eta^2}{4} \int_0^t  \wh \BSVega_s^2\;\alpha_s^2 F_1(Y_s)^2 ds,\\
\wh \MSHE(\Delta)  &= \frac{\eta^2}{4} \int_0^t  \wh \BSVega_s^2\;\alpha_s^2 (F_1(Y_s)^2 + 2 \rho F_1(Y_s) F_2(Y_s) + F_2(Y_s)^2) ds.
\end{align}
\end{thm}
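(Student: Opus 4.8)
The plan is to reduce each of the three error formulas to the generic expressions already established: \eqref{eq:Delta_MSHE_generic} for the Delta hedge, \eqref{eq:VO_MSHE_generic}--\eqref{eq:volovol} for the Bartlett/variance-optimal hedge, and \eqref{eq:MSHE_general} for the HKLW hedge --- all evaluated at the dynamic approximation $\hat\Sigma_t = \alpha_t f(Y_t)$ in place of the true $\Sigma_t$. Everything then comes down to computing the three quadratic covariations $d\scal{S}{S}_t$, $d\scal{\hat\Sigma}{S}_t$ and $d\scal{\hat\Sigma}{\hat\Sigma}_t$ in the SABR model and substituting.

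First I would record the elementary covariations read off from $dS_t = S_t\alpha_t\,dB_t$, $d\alpha_t = \tfrac{\eta}{2}\alpha_t\,dW_t$, $d\scal{B}{W}_t = \rho\,dt$: namely $d\scal{S}{S}_t = S_t^2\alpha_t^2\,dt$, $d\scal{\alpha}{\alpha}_t = \tfrac{\eta^2}{4}\alpha_t^2\,dt$ and $d\scal{S}{\alpha}_t = \tfrac{\eta\rho}{2}S_t\alpha_t^2\,dt$. Next, treating $\hat\Sigma$ as a function of $(S,\alpha)$ and using property \textbf{Q-I}, I need its partials; the chain rule applied to $\hat\Sigma = \alpha f(Y)$ with $Y = \tfrac{\eta}{\alpha}\log(K/S)$ yields $\pd{\hat\Sigma}{\alpha} = f(Y) - Y f'(Y) = F_1(Y)$ and $\pd{\hat\Sigma}{S} = -\tfrac{\eta}{S}f'(Y) = \tfrac{\eta}{2S}F_2(Y)$, which are exactly the identities of Remark (a) after Theorem~\ref{thm:SABR}. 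Substituting into \textbf{Q-I} then gives $d\scal{\hat\Sigma}{S}_t = \tfrac{\eta}{2}S_t\alpha_t^2\big(\rho F_1(Y_t) + F_2(Y_t)\big)\,dt$ and $d\scal{\hat\Sigma}{\hat\Sigma}_t = \tfrac{\eta^2}{4}\alpha_t^2\big(F_1(Y_t)^2 + 2\rho F_1(Y_t) F_2(Y_t) + F_2(Y_t)^2\big)\,dt$.

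From here the three claims follow by substitution and simplification. The Delta result is immediate from \eqref{eq:Delta_MSHE_generic}. For the variance-optimal hedge I would form $dQ^{\hat\Sigma}_t = d\scal{\hat\Sigma}{\hat\Sigma}_t - \tfrac{(d\scal{\hat\Sigma}{S}_t)^2}{d\scal{S}{S}_t}$; the $F_2^2$ and the $\rho F_1 F_2$ cross terms cancel, leaving $dQ^{\hat\Sigma}_t = \tfrac{\eta^2}{4}\alpha_t^2(1-\rho^2)F_1(Y_t)^2\,dt$, which inserted into \eqref{eq:VO_MSHE_generic} gives the stated formula for $\wh\MSHE(\theta^{AVO})$. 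For the HKLW hedge I would use $\theta^{HKLW}_t - \wh\BSDelta_t = \wh\BSVega_t\,\pd{\hat\Sigma}{S} = \wh\BSVega_t\,\tfrac{\eta}{2S_t}F_2(Y_t)$ and plug this, together with the covariations above, into \eqref{eq:MSHE_general}; an analogous cancellation reduces the integrand to $\wh\BSVega_t^2\,\tfrac{\eta^2}{4}\alpha_t^2 F_1(Y_t)^2\,dt$.

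The only genuine work is the algebra, in particular the two cancellations: the one producing the factor $1-\rho^2$ in $Q^{\hat\Sigma}$ (equivalently, the $S$-orthogonal part $\hat\Sigma^\perp$ of the approximate implied volatility has vol-of-vol $\tfrac{\eta}{2}\sqrt{1-\rho^2}\,\alpha F_1(Y)$), and the one making the HKLW error coincide with the $\rho=0$ case. Both are routine but must be carried out carefully while tracking $F_1$ and $F_2$; I do not expect any conceptual obstacle beyond this bookkeeping.
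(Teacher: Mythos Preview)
Your proposal is correct and follows essentially the same route as the paper's proof: compute the three quadratic covariations $d\scal{S}{S}$, $d\scal{\hat\Sigma}{S}$, $d\scal{\hat\Sigma}{\hat\Sigma}$ from the SABR dynamics and the approximation $\hat\Sigma=\alpha f(Y)$, then substitute into the generic formulas \eqref{eq:VO_MSHE_generic}--\eqref{eq:volovol}, \eqref{eq:MSHE_general} and \eqref{eq:Delta_MSHE_generic}. The only cosmetic difference is that the paper writes out $d\hat\Sigma_t$ via It\^o's formula in terms of $dW_t$ and $dB_t$ before taking brackets, while you compute the partials $\partial_\alpha\hat\Sigma=F_1$, $\partial_S\hat\Sigma=\tfrac{\eta}{2S}F_2$ and invoke \textbf{Q-I} directly; both yield the same covariations and the same two cancellations.
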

\begin{rem}
\begin{enumerate}[(a)]
\item It can be seen that in the `complete market limit' $|\rho| \to 1$, the MSHE vanishes only for the Bartlett/variance-optimal strategy. The other two strategies are not able to fully exploit the correlation of stock price and stochastic variance. 
\item The formulas for the MSHE are calculated `within-approximation', i.e. the approximation $\Sigma \approx \wh \Sigma$ is used to both calculate the strategy and to evaluate its error. A more honest evaluation would be to calculate the strategy using $\hat \Sigma$ and to evaluate the error using $\Sigma$. We expect the error approximations to be biased towards zero (that is, a little too optimistic) compared to this honest evaluation.
\end{enumerate}
\end{rem}

Comparing the hedging errors of the strategies in Thm.~\ref{thm:SABR_MSHE}, we can compute the difference of the mean-squared hedging error of the simple Delta strategy and the Bartlett/variance-optimal strategy as
\begin{equation}\label{eq:difference_SABR}
\wh \MSHE(\Delta) - \wh\MSHE(\theta^{Bartlett}) = \frac{\eta^2}{4} \int_0^t  \wh \BSVega_s^2\;\alpha_s^2 (\rho F_1(Y_s) + F_2(Y_s)) ^2 ds \ge 0.
\end{equation}
Moreover, it is easy to see that
\[\wh \MSHE(\theta^{HKLW}) = \frac{\wh \MSHE(\theta^{Bartlett})}{1 - \rho^2},\]
i.e. the error of the HKLW-strategy is always larger than the error of the Bartlett/variance-optimal strategy by a factor of $1/(1 - \rho^2)$, independent of all other model parameters. 
Finally, taking the difference of HKLW and the simple Delta strategy yields
\begin{equation}\label{eq:Delta_vs_HKLW}
\wh\MSHE(\Delta) -  \wh \MSHE(\theta^{HKLW})  = \frac{\eta^2}{4} \int_0^t  \wh \BSVega_s^2\;\alpha_s^2 \left(F_2^2(Y_t) + 2 \rho F_1(Y_s) F_2(Y_s) \right)  ds.
\end{equation}
For $\rho \approx 0$, the HKLW strategy improves upon the simple Delta hedge, which was the original intention of Hagan et al. For larger $|\rho|$, the conclusion from \eqref{eq:Delta_vs_HKLW} becomes unclear; the first-order analysis below shows that we should expect  the simple Delta hedge to have the smaller error.

\begin{proof}[Proof of Thms.~\ref{thm:SABR} and \ref{thm:SABR_MSHE}]
Applying Ito's formula to \eqref{eq:AAA_SABR} we obtain
\begin{align*}
d \hat \Sigma_t &= \pd{\wh \Sigma_t}{\alpha} d\alpha_t + \pd{\wh \Sigma_t}{S} dS_t + \textit{drift} = \frac{\eta}{2} \alpha_t \left(f(Y_t) - f'(Y_t) Y_t\right) dW_t - \eta \alpha_t f'(Y_t) dB_t + \textit{drift} = \\&= \frac{\eta}{2} \alpha_t \left(F_1(Y_t)dW_t + F_2(Y_t)dB_t\right) + \textit{drift}.
\end{align*}
Hence, we have
\[d \scal{\hat \Sigma}{S}_t = \frac{\eta}{2}\alpha_t^2 S_t \left(\rho F_1(Y_t)  + F_2(Y_t)\right) dt.\]
Together with 
\[d \scal{S}{S}_t = S_t^2 \alpha_t^2 dt,\]
and inserting into \eqref{eq:VO_generic} we obtain the approximate variance-optimal strategy \eqref{eq:AVO_SABR}; coinciding with \eqref{eq:Bartlett_hedge}. For the mean-square hedging error, we calculate
\[d \scal{\hat \Sigma}{\hat \Sigma}_t = \frac{\eta^2}{4}\alpha_t^2  \left(F_1(Y_t)^2 + 2 \rho F_1(Y_t) F_2(Y_t)  + F_2(Y_t)^2\right) dt.\]
Inserting into \eqref{eq:volovol} we obtain the orthogonal vol-of-vol process
\[dQ^{\hat \Sigma} = \frac{\eta^2}{4} \alpha_t^2 \left\{(F_1(Y_t)^2 + 2\rho F_1(Y_t)F_2(Y_t) + F_2(Y_t)) - (\rho F_1(Y_t) + F_2(Y_t))^2\right\} = \frac{\eta^2}{4} \alpha_t^2 (1 - \rho^2) F_1(Y_t)^2;\]
from Thm.~\ref{thm:VO_generic} we obtain \eqref{eq:AVO_MSHE_SABR}. The expressions for the mean-square hedging errors of the HKLW- and the simple Delta strategy follow by applying Thm.~\ref{thm:MSHE}.
\end{proof}

\subsection{First-order analysis}
We simplify the comparison of the different hedging strategies, using the first-order approximation 
\[f(y) \approx 1 + \frac{\rho}{4} y,\]
of the SABR implied volatility from \cite{fukasawa2022rough}. Under this approximation
\begin{equation}
F_1(y) \approx 1, \qquad F_2(y) \approx -\frac{\rho}{2},
\end{equation}
and the Bartlett/variance-optimal strategy is approximated by 
\[\theta_t^{Bartlett} \approx \wh \BSDelta_t + \frac{\rho \eta}{4S_t}  \wh \BSVega_t.\]
The mean-squared hedging errors of the Bartlett/variance-optimal, the HKLW- and the Delta strategy become
\begin{align}\label{eq:SABR_approx1}
\wh \MSHE(\theta^{Bartlett}) &\approx  \frac{\eta^2}{4} ( 1- \rho^2) \int_0^t  \wh \BSVega_s^2\;\alpha_s^2 ds,\\
\wh \MSHE(\theta^{HKLW}) &\approx  \frac{\eta^2}{4} \int_0^t  \wh \BSVega_s^2\;\alpha_s^2 ds,\label{eq:SABR_approx2}\\
\wh \MSHE(\Delta) &\approx  \frac{\eta^2}{4} \left( 1- \frac{3\rho^2}{4}\right) \int_0^t  \wh \BSVega_s^2\;\alpha_s^2 ds,\label{eq:SABR_approx3}
\end{align}
such that our first-order analysis suggests
\begin{equation}\label{eq:ordering}
\wh \MSHE(\theta^{Bartlett}) \le \wh \MSHE(\Delta) \lesssim \wh \MSHE(\theta^{HKLW}).
\end{equation}
Note, however, that the non-asymptotic result \eqref{eq:Delta_vs_HKLW} shows an advantage of the HKLW-strategy over the Delta hedge for $\rho \approx 0$, which gets lost in the first-order approximation.\\
An interesting quantity to analyze is the \emph{relative reduction} in root-mean-square error of the Bartlett/variance-optimal strategy in comparison to the Delta hedge. Using \eqref{eq:SABR_approx1} and \eqref{eq:SABR_approx3} we obtain
\begin{equation}\label{eq:relred_SABR}
\textsf{RelRed}_\rho = \frac{\sqrt{\MSHE(\Delta)} - \sqrt{\MSHE(\theta^{Bartlett}) }}{\sqrt{\MSHE(\Delta)}} = 1 - 2 \sqrt{\frac{1 - \rho^2}{4 - 3\rho^2}}.
\end{equation}
Note that this expression is independent of the option's strike and its time-to-maturity, but loses accuracy for ITM/OTM options. The quantity is shown as a function of $\rho$ in Figure~\ref{fig:relred_theory}. It is flat around zero and only starts to rise steeply in vicinity of the endpoints $\rho = \pm 1$. This suggest that the advantage of the Bartlett/variance-optimal hedge over the Delta hedge only starts to manifest in models with a leverage $\rho$ close to $\pm 1$. A half-way reduction of the root-mean-squared error, for example, is achieved only at $\rho = \pm \sqrt{\frac{12}{13}} \approx \pm 0.96$.

\begin{figure}
\centering
\includegraphics[width=0.75\textwidth]{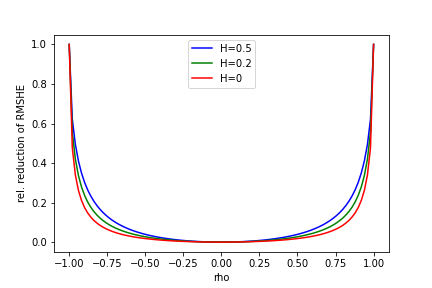} \label{fig:relred_theory}
\caption{\textbf{Relative Reduction in hedging error - first order approximation.} The first-order approximation \eqref{eq:relred_rough} (see also \eqref{eq:relred_SABR}) of the relative reduction in root-mean-squared hedging error for the variance-optimal strategy compared to the simple Delta hedge is shown in dependence on correlation $\rho$ and roughness parameter $H$.}
\end{figure}

\section{The rough Bergomi model}\label{sec:rough}
We consider the rough Bergomi model, introduced in \cite{bayer2016pricing} to generalize Bergomi's model \cite{bergomi2008smile}, with price process $S$ and forward variance $\xi_t(s) = \Econd{\alpha^2_s}{\cF_t}$ given by 
\begin{align*}
dS_t &= S_t \alpha_t dB_t\\
d\xi_t(s) &= \xi_t(s) \kappa(s-t)dW_t˛\quad s > t.
\end{align*}
Here, $d\scal{B}{W}_t = \rho dt$ with $\rho \in [-1,1]$; $\kappa$ is the power-law kernel $\kappa(r) = \eta \sqrt{2H} r^{H - 1/2}$ with $H \in [0,1/2]$, and $\alpha_t = \sqrt{\xi_t(t)}$. Note that the Hurst parameter $H$ controls the roughness of the volatility process $\alpha$ and the kernel $\kappa$ reproduces the power-law behavior of the volatility skew, cf. \cite{gatheral2018volatility}.
An asymptotically arbitrage-free approximation $\hat \Sigma$ of $\Sigma$ is given by \cite{fukasawa2022rough} as 
\begin{equation}\label{eq:AAA_Bergomi}
\hat \Sigma_t = U_t f(Y_t), \qquad Y_t = \frac{\kappa(T-t)}{U_t} \log \left(K/S_t\right),
\end{equation}
where 
\[U_t = \sqrt{\frac{1}{T-t}\int_t^T \xi_t(s)ds}\]
and $f$ is the solution of the ODE
\begin{equation}\label{eq:Bergomi_ODE}
\left(1 - \frac{yf'(y)}{f(y)}\right) \left(1- 2\rho \frac{y}{2H+1}  + \left(\frac{y}{2H+1}\right)^2 \right) = f(y)^2 \left(1 - (1- 2H)\frac{yf'(y)}{f(y)}\right), \qquad f(0) = 1.
\end{equation}
We will also need the auxilliary process
\[R_t = \frac{\int_t^T \kappa (s - t) \xi_t(s) ds}{\kappa(T-t) \int_t^T \xi_t(s) ds}\]
and use the approximations
\begin{equation}\label{eq:approx}
\frac{U_t}{\alpha_t} \approx 1, \quad \text{and} \quad R_t \approx \frac{1}{H + 1/2},
\end{equation}
which become exact in the limit $t \to T$, see \cite{fukasawa2022rough}. Note that they are also exact in the $H \to \tfrac{1}{2}$ limit, i.e., in the SABR model, see Sec.~\ref{sec:SABR}.

\subsection{Variance-Optimal Hedging}
\begin{thm}\label{thm:rough}The approximate variance-optimal strategy $\theta^{AVO}$ for the rough Bergomi model, obtained by substituting $\hat \Sigma$ for $\Sigma$ in \eqref{eq:VO_generic} and using the approximations \eqref{eq:approx}, is given by
%\begin{align}
%\theta_t^{AVO} &= \BSDelta_t + \frac{\eta}{2} \frac{\BSVega_t}{S_t} \left[\rho ( f(Y_t) - Y_t f'(Y_t)) - 2 f'(Y_t) \frac{\alpha_t}{U_t}\right] \approx \\
%&\approx \BSDelta_t + \frac{\eta}{2}\phi(d_1) \sqrt{T-t} \left[\rho ( f(Y_t) - Y_t f'(Y_t)) - 2 f'(Y_t) \right].
%\end{align}
\begin{align}
\theta_t^{AVO} &= \wh \BSDelta_t + \frac{\kappa(T-t)}{2} \frac{\wh \BSVega_t}{S_t} \Big(\frac{\rho}{H + 1/2} F_1(Y_t)  + F_2(Y_t)\Big) =\notag\\
&= \Phi(\wh \BSd_t^+) + \frac{\eta}{2}\phi(\wh \BSd_t^+) \sqrt{2H} (T-t)^H \Big(\frac{\rho}{H + 1/2} F_1(Y_t)  + F_2(Y_t)\Big),\label{eq:AVO_rough}
\end{align}
where
\begin{align*}
F_1(y) &= F_1(y;\rho,H) = f(y) - y f'(y)\\
F_2(y) &= F_2(y;\rho,H) = -2 f'(y),
\end{align*}
and $f$ the solution of \eqref{eq:Bergomi_ODE}. The approximate mean-squared hedging error, obtained by substituting $\hat \Sigma$ for $\Sigma$ in \eqref{eq:VO_MSHE_generic} and using the approximations \eqref{eq:approx} is given by
\begin{equation}\label{eq:AVO_MSHE_rough}
\wh \MSHE(\theta^{AVO}) = \frac{(1- \rho^2)}{(H + 1/2)^2} \int_0^T  2H (T-s)^{2H - 1} \wh \BSVega_s^2  \alpha_s^2 F_1(Y_s)^2 ds.
\end{equation}
\end{thm}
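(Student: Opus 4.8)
The plan is to follow the template of the proofs of Theorems~\ref{thm:SABR} and~\ref{thm:SABR_MSHE}: apply It\^o's formula to the dynamic approximation $\hat\Sigma_t = U_t f(Y_t)$, read off the coefficients of $dW_t$ and $dB_t$ in its martingale part, and then substitute the resulting covariations $d\scal{S}{S}$, $d\scal{\hat\Sigma}{S}$ and $d\scal{\hat\Sigma}{\hat\Sigma}$ into Theorem~\ref{thm:VO_generic}. The one genuinely new feature relative to the SABR case is that $\hat\Sigma$ is built from the volatility proxy $U_t=\big(\tfrac{1}{T-t}\int_t^T\xi_t(s)\,ds\big)^{1/2}$ and from $Y_t=\tfrac{\kappa(T-t)}{U_t}\log(K/S_t)$, so the first substantive task is to compute the dynamics of $U_t$ and $Y_t$; I expect this to be the main obstacle, everything downstream being the same bookkeeping as in the SABR proof.

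For $U_t$ I would first differentiate $I_t:=\int_t^T\xi_t(s)\,ds$ in $t$: the moving lower limit contributes the boundary term $-\xi_t(t)\,dt=-\alpha_t^2\,dt$ to the drift, while the forward-variance dynamics $d\xi_t(s)=\xi_t(s)\kappa(s-t)\,dW_t$ contribute the martingale term $\big(\int_t^T\xi_t(s)\kappa(s-t)\,ds\big)\,dW_t$, which by the definition of the auxiliary process $R_t$ equals $R_t\,\kappa(T-t)\,I_t\,dW_t$. Hence $U_t^2=I_t/(T-t)$ has martingale part $R_t\,\kappa(T-t)\,U_t^2\,dW_t$, and taking the square root, $dU_t$ has martingale part $\tfrac12 R_t\,\kappa(T-t)\,U_t\,dW_t$; the approximation $R_t\approx(H+\tfrac12)^{-1}$ from \eqref{eq:approx} turns this into $\tfrac{\kappa(T-t)}{2(H+1/2)}\,U_t\,dW_t$. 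Writing then $Y_t=\kappa(T-t)\,U_t^{-1}\log(K/S_t)$ and combining $d\log S_t=\alpha_t\,dB_t-\tfrac12\alpha_t^2\,dt$ with the dynamics of $U_t$ just found, the product/chain rule gives the martingale part of $dY_t$ as $-\tfrac{\kappa(T-t)\alpha_t}{U_t}\,dB_t-\tfrac{\kappa(T-t)Y_t}{2(H+1/2)}\,dW_t$, where I use $U_t/\alpha_t\approx1$ to replace $\alpha_t/U_t$ by $1$ in the $dB_t$-coefficient.

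With these at hand, It\^o's formula applied to $U_tf(Y_t)$ — the $U_t$-factors cancelling exactly in the $dB_t$-term — produces
\[
d\hat\Sigma_t=\frac{\kappa(T-t)\,\alpha_t}{2}\Big(\frac{F_1(Y_t)}{H+1/2}\,dW_t+F_2(Y_t)\,dB_t\Big)+\textit{drift},
\]
with $F_1=f-yf'$, $F_2=-2f'$, after also using $U_t\approx\alpha_t$ in the $dW_t$-coefficient. Since $d\scal{B}{W}_t=\rho\,dt$ and $d\scal{B}{B}_t=dt$, this gives $d\scal{S}{S}_t=S_t^2\alpha_t^2\,dt$ and $d\scal{\hat\Sigma}{S}_t=\tfrac{\kappa(T-t)\alpha_t^2 S_t}{2}\big(\tfrac{\rho F_1(Y_t)}{H+1/2}+F_2(Y_t)\big)\,dt$; plugging into \eqref{eq:VO_generic} the common factor $S_t^2\alpha_t^2$ cancels and \eqref{eq:AVO_rough} drops out, its second line following on inserting $\kappa(T-t)=\eta\sqrt{2H}(T-t)^{H-1/2}$ and $\wh\BSVega_t=S_t\phi(\wh\BSd_t^+)\sqrt{T-t}$.

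Finally, for the hedging error I would compute $d\scal{\hat\Sigma}{\hat\Sigma}_t=\tfrac{\kappa(T-t)^2\alpha_t^2}{4}\big(\tfrac{F_1^2}{(H+1/2)^2}+\tfrac{2\rho F_1F_2}{H+1/2}+F_2^2\big)\,dt$ and feed it, together with the two covariations above, into the orthogonal vol-of-vol formula \eqref{eq:volovol}. Exactly as in the SABR computation, the $\tfrac{2\rho F_1F_2}{H+1/2}$ and $F_2^2$ terms cancel against the corresponding terms of $(d\scal{\hat\Sigma}{S}_t)^2/d\scal{S}{S}_t$, leaving $dQ^{\hat\Sigma}_t=\tfrac{\kappa(T-t)^2\alpha_t^2}{4}\cdot\tfrac{1-\rho^2}{(H+1/2)^2}\,F_1(Y_t)^2\,dt$; substituting $\kappa(T-t)^2=\eta^2\,2H\,(T-t)^{2H-1}$ and inserting into \eqref{eq:VO_MSHE_generic} then yields \eqref{eq:AVO_MSHE_rough}. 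As a consistency check, the limit $H\to\tfrac12$ (where $\kappa(T-t)\to\eta$, $R_t\to1$, $U_t\to\alpha_t$, and \eqref{eq:Bergomi_ODE} collapses to the algebraic relation defining $f$ in \eqref{eq:f}) recovers Theorems~\ref{thm:SABR} and~\ref{thm:SABR_MSHE}.
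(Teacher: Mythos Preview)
Your proposal is correct and follows essentially the same route as the paper's proof: compute the martingale part of $d\hat\Sigma_t$, read off $d\scal{\hat\Sigma}{S}_t$, $d\scal{S}{S}_t$, $d\scal{\hat\Sigma}{\hat\Sigma}_t$, and plug into Theorem~\ref{thm:VO_generic} and \eqref{eq:volovol}. The only difference is cosmetic: the paper imports the decomposition $d\hat\Sigma_t=\tfrac{\kappa(T-t)}{2}\big(R_tU_tF_1(Y_t)\,dW_t+\alpha_tF_2(Y_t)\,dB_t\big)+\textit{drift}$ directly from \cite{fukasawa2022rough} and only then applies the approximations \eqref{eq:approx}, whereas you re-derive this formula yourself via $dI_t$, $dU_t$, $dY_t$ (and insert \eqref{eq:approx} a step earlier); note in particular that your intermediate remark about replacing $\alpha_t/U_t$ by $1$ in the $dB_t$-coefficient of $dY_t$ is unnecessary, since---as you observe in the next paragraph---the $U_t$-factors cancel exactly in the $dB_t$-part of $d\hat\Sigma_t$.
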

\begin{rem}Explicit expressions for the solution $f$ of \eqref{eq:Bergomi_ODE} (and hence of $F_1(y)$ and $F_2(y)$) are only available for $H = 0$ and $H = \tfrac{1}{2}$, but not for intermediate values. For $H \in (0,\tfrac{1}{2})$, we use the approximation of \cite{fukasawa2022rough}; see Appendix~\ref{app}.
\end{rem}

%\begin{thm}The approximate variance-optimal strategy $\theta^{AVO}$ for the rough Bergomi model, obtained by substituting $\hat \Sigma$ for $\Sigma$ in \eqref{eq:VO_generic} and using the approximations \eqref{eq:approx}, is given by
%%\begin{align}
%%\theta_t^{AVO} &= \BSDelta_t + \frac{\eta}{2} \frac{\BSVega_t}{S_t} \left[\rho ( f(Y_t) - Y_t f'(Y_t)) - 2 f'(Y_t) \frac{\alpha_t}{U_t}\right] \approx \\
%%&\approx \BSDelta_t + \frac{\eta}{2}\phi(d_1) \sqrt{T-t} \left[\rho ( f(Y_t) - Y_t f'(Y_t)) - 2 f'(Y_t) \right].
%%\end{align}
%\begin{align}
%\theta_t^{AVO} &= \wh \BSDelta_t + \frac{\kappa(T-t)}{2S_t} \wh \BSVega_t \Big(\rho R_t \frac{U_t}{\alpha_t} F_1(Y_t)  + F_2(Y_t)\big) =\notag\\
%&= \Phi(\wh \BSd_t^+) + \frac{\eta}{2}\phi(\wh \BSd_t^+) \sqrt{2H} (T-t)^H \big(\rho R_t \frac{U_t}{\alpha_t} F_1(Y_t)  + F_2(Y_t)\big),\label{eq:AVO}
%\end{align}
% The approximate mean-squared hedging error, obtained by substituting $\hat \Sigma$ for $\Sigma$ in \eqref{eq:VO_MSHE_generic}, is given by
%\begin{equation}\label{eq:AVO_MSHE}
%\wh \MSHE(\theta^{AVO}) = 2 H ( 1- \rho^2) \int_0^T  (T-s)^{2H - 1} \wh \BSVega_s^2 R_s^2 U_s^2 F_1(Y_s)^2 ds.
%\end{equation}
%\end{thm}

\begin{cor}\label{cor:Delta_rough}Let $\Delta_t = \wh \BSDelta_t$ denote the Delta-hedging strategy (using the approximate implied volatility $\hat \Sigma$). 
The approximate mean-square hedging error of this strategy is
\begin{equation}
\wh \MSHE(\Delta)  = \frac{\eta^2}{4} \int_0^t  \BSVega_s^2\;\alpha_s^2 \left(\left(\frac{F_1(Y_s)}{H+\tfrac{1}{2}}\right)^2 + 2 \rho \frac{F_1(Y_s)}{H + \tfrac{1}{2}} F_2(Y_s) + F_2(Y_s)^2\right) ds
\end{equation}
and its difference to the error of the approximate variance-optimal strategy is given by
\begin{equation}
\wh \MSHE(\Delta) - \wh \MSHE(\theta^{AVO}) = \frac{\eta^2}{4} \int_0^t  \BSVega_s^2\;\alpha_s^2 \left(\rho \frac{F_1(Y_s)}{H + \tfrac{1}{2}} + F_2(Y_s)\right) ^2 ds \ge 0.
\end{equation}
\end{cor}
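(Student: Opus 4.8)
The plan is to reduce everything to Corollary~\ref{cor:MSHE_Delta} and to the Ito decomposition of $\hat\Sigma$ that already underlies Theorem~\ref{thm:rough}. Substituting $\hat\Sigma$ for $\Sigma$ in \eqref{eq:Delta_MSHE_generic} gives $\wh\MSHE(\Delta) = \E{\int_0^T \wh\BSVega_t^2\,d\scal{\hat\Sigma}{\hat\Sigma}_t}$, and substituting in \eqref{eq:Diff_MSHE_generic} gives $\wh\MSHE(\Delta)-\wh\MSHE(\theta^{AVO}) = \E{\int_0^T \wh\BSVega_t^2\,\big(d\scal{\hat\Sigma}{S}_t/d\scal{S}{S}_t\big)^2\,d\scal{S}{S}_t}$. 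The cross term $d\scal{\hat\Sigma}{S}_t$ and the variance $d\scal{S}{S}_t = S_t^2\alpha_t^2\,dt$ are exactly the quantities produced in the proof of Theorem~\ref{thm:rough} (they are what makes \eqref{eq:AVO_rough} hold), so the only genuinely new ingredient is the quadratic variation $d\scal{\hat\Sigma}{\hat\Sigma}_t$.

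To obtain it I would differentiate $\hat\Sigma_t = U_t f(Y_t)$ from \eqref{eq:AAA_Bergomi} by Ito's formula, proceeding as in the proof of Theorem~\ref{thm:SABR} but now with the extra layer of the volatility average $U_t$. From $U_t^2 = \tfrac1{T-t}\int_t^T \xi_t(s)\,ds$ and $d\xi_t(s) = \xi_t(s)\kappa(s-t)\,dW_t$ one gets the kernel-weighted average $\tfrac1{T-t}\int_t^T \kappa(s-t)\xi_t(s)\,ds = \kappa(T-t)R_t\,U_t^2$, hence $dU_t = \tfrac12 U_t \kappa(T-t) R_t\,dW_t + \textit{drift}$; differentiating $Y_t = \kappa(T-t)\log(K/S_t)/U_t$ and using $dS_t = S_t\alpha_t\,dB_t$ then gives the martingale part of $dY_t$. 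Applying the approximations \eqref{eq:approx} ($U_t/\alpha_t\approx1$, $R_t\approx 1/(H+\tfrac12)$) collapses this to
\[
d\hat\Sigma_t = \frac{\alpha_t\,\kappa(T-t)}{2}\left(\frac{F_1(Y_t)}{H+\tfrac12}\,dW_t + F_2(Y_t)\,dB_t\right) + \textit{drift},
\]
the same form as in the SABR case (into which it reduces at $H=\tfrac12$, where $\kappa\equiv\eta$ and $R_t\equiv1$) but with the factor $1/(H+\tfrac12)$ in front of $F_1$ on the $dW_t$-term, which is the footprint of $R_t\approx 1/(H+\tfrac12)$. Using $d\scal{W}{W}_t = d\scal{B}{B}_t = dt$, $d\scal{W}{B}_t = \rho\,dt$ and $\kappa(T-t)^2 = 2H\eta^2(T-t)^{2H-1}$ one reads off
\[
d\scal{\hat\Sigma}{\hat\Sigma}_t = \frac{\eta^2}{4}\,2H(T-t)^{2H-1}\alpha_t^2\left(\Big(\tfrac{F_1(Y_t)}{H+1/2}\Big)^2 + 2\rho\,\tfrac{F_1(Y_t)}{H+1/2}F_2(Y_t) + F_2(Y_t)^2\right)dt,
\]
and $d\scal{\hat\Sigma}{S}_t = \tfrac12 S_t\alpha_t^2\kappa(T-t)\big(\tfrac{\rho}{H+1/2}F_1(Y_t)+F_2(Y_t)\big)\,dt$. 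Plugging the first into $\wh\MSHE(\Delta) = \E{\int_0^T\wh\BSVega_t^2\,d\scal{\hat\Sigma}{\hat\Sigma}_t}$ yields the first formula of the corollary; plugging the cross term into \eqref{eq:Diff_MSHE_generic}, where $\big(d\scal{\hat\Sigma}{S}_t/d\scal{S}{S}_t\big)^2 d\scal{S}{S}_t = \tfrac14\alpha_t^2\kappa(T-t)^2\big(\tfrac{\rho}{H+1/2}F_1(Y_t)+F_2(Y_t)\big)^2 dt$, yields the claimed non-negative difference. As a consistency check one verifies that the orthogonal vol-of-vol \eqref{eq:volovol} telescopes to $dQ^{\hat\Sigma}_t = \tfrac14\alpha_t^2\kappa(T-t)^2(1-\rho^2)\big(\tfrac{F_1(Y_t)}{H+1/2}\big)^2 dt$, recovering \eqref{eq:AVO_MSHE_rough} and making the subtraction transparent.

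The only step with real content is the Ito differentiation of $\hat\Sigma_t = U_t f(Y_t)$: one has to track how the kernel-weighted average of forward variances in $dU_t$ and $dY_t$ generates the factor $R_t$, and to see that it is precisely the approximation $R_t\approx 1/(H+\tfrac12)$ of \eqref{eq:approx} — exact in the $t\to T$ and $H\to\tfrac12$ limits — that turns the dynamics into the SABR-like form above; everything after that is collecting quadratic covariations, recognising the perfect square $\big(\tfrac{\rho}{H+1/2}F_1+F_2\big)^2$, and using $\theta-\wh\BSDelta\equiv0$ for the Delta strategy so that Theorem~\ref{thm:MSHE} reduces to the pure vol-of-vol term. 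As in the corresponding remark for SABR, I would flag that all three formulas are `within-approximation' identities — exact for $\hat\Sigma$, only approximate for the true implied volatility $\Sigma$ — and hence mildly optimistic as estimates of the actual hedging errors.
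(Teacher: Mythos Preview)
Your proposal is correct and follows essentially the same route as the paper: compute $d\scal{\hat\Sigma}{\hat\Sigma}_t$ and $d\scal{\hat\Sigma}{S}_t$ from the Ito decomposition of $\hat\Sigma$, apply the approximations \eqref{eq:approx}, and invoke Corollary~\ref{cor:MSHE_Delta}. The only presentational difference is that the paper quotes the martingale part of $d\hat\Sigma_t$ directly from \cite{fukasawa2022rough}, whereas you re-derive it from $\hat\Sigma_t = U_t f(Y_t)$; your derivation is correct and makes the appearance of $R_t$ (and hence of $1/(H+\tfrac12)$) more transparent. Note also that your computation retains the kernel factor $2H(T-t)^{2H-1}$ in $d\scal{\hat\Sigma}{\hat\Sigma}_t$, which is consistent with \eqref{eq:AVO_MSHE_rough} but appears to have been inadvertently dropped in the corollary's displayed formula.
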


\begin{proof}[Proof of Thm.~\ref{thm:rough} and Cor.~\ref{cor:Delta_rough}]
From \cite{fukasawa2022rough} we know that 
\[d \hat \Sigma_t = \frac{\kappa(T-t)}{2} \left(R_t U_t F_1(Y_t)dW_t + \alpha_t F_2(Y_t)dB_t\right) + \textit{drift}.\]
Hence, we have
\[d \scal{\hat \Sigma}{S}_t = \frac{\kappa(T-t)}{2}\alpha_t S_t \left(\rho R_t U_t F_1(Y_t)  + \alpha_t F_2(Y_t)\right) dt.\]
Together with 
\[d \scal{S}{S}_t = S_t^2 \alpha_t^2 dt,\]
and inserting into \eqref{eq:VO_generic} we obtain the approximate variance-optimal strategy \eqref{eq:AVO_rough}.
For the mean-square hedging error, we calculate
\[d \scal{\hat \Sigma}{\hat \Sigma}_t = \frac{\eta^2}{4}\left(R_t^2 U_t^2 F_1(Y_t)^2 + 2 \rho R_t U_t \alpha_t F_1(Y_t) F_2(Y_t)  + \alpha_t^2 F_2(Y_t)^2\right) dt.\]
Inserting into \eqref{eq:volovol} we obtain the orthogonal vol-of-vol process
\[dQ^{\hat \Sigma} = \frac{\kappa(T-t)^2}{4} (1 - \rho^2) R_t^2 U_t^2 F_1(Y_t)^2.\]
Together with \eqref{eq:VO_MSHE_generic} and using the approximations \eqref{eq:approx} we obtain \eqref{eq:AVO_MSHE_rough}. Cor.~\ref{cor:Delta_rough} now follows by an application of Cor.~\ref{cor:MSHE_Delta}.
\end{proof}

\subsection{First-order analysis}
We use the first-order approximation 
\[f(y) \approx 1 + \frac{a_H}{2} \rho y, \qquad a_H = \frac{1}{(H+\frac{1}{2})(H + \frac{3}{2})}\]
of the rough Bergomi implied volatility, as given in \cite{fukasawa2022rough}. Under this approximation
\begin{equation}
F_1(y) \approx 1, \qquad F_2(y) \approx -a_H \rho.
\end{equation}
Using these approximations, the error of the approximate variance-optimal and the Delta strategy become
\begin{align}\label{eq:Bergomi_approx}
\wh \MSHE(\theta^{AVO}) &\approx  \frac{( 1- \rho^2)}{(H + \tfrac{1}{2})^2}  \int_0^t  \kappa(t-s)^2 \wh \BSVega_s^2\;\alpha_s^2 ds, \\
\wh \MSHE(\Delta) &\approx  \frac{1}{(H + \tfrac{1}{2})^2} \left( 1- 2\rho^2\frac{H+1}{(H + \tfrac{3}{2})^2}\right) \int_0^t  \kappa(t-s)^2 \wh \BSVega_s^2\;\alpha_s^2 ds.
\end{align}
The ratio of these quantities evaluates to 
\[r = \frac{\wh \MSHE(\theta^{AVO})}{\wh \MSHE(\Delta)} = \Big(1 - \rho^2\Big)/\left(1 - 2\rho^2\frac{H+1}{(H + \tfrac{3}{2})^2}\right).\]
Taking derivatives, it is easy to see that this ratio is decreasing in $H$ on $(0,\tfrac{1}{2})$. Consequently, the relative reduction of the hedging error is given by 
\begin{equation}\label{eq:relred_rough}
\textsf{RelRed}_{\rho, H} = \frac{\sqrt{\MSHE(\Delta)} - \sqrt{\MSHE(\theta^{AVO})}}{\sqrt{\MSHE(\Delta)}} = 1 -  (H + \tfrac{3}{2}) \sqrt{\frac{1 - \rho^2}{(H + \tfrac{3}{2})^2 - 2(H + 1)\rho^2}}
\end{equation}
and is increasing in $H$. We conclude that the effectiveness of variance-optimal hedging, relative to Delta hedging, is largest in the SABR case and smallest in very rough models with $H$ close to zero. A plot of $\textsf{RelRed}_{\rho, H}$ in dependency of $\rho$ and for different $H$ is shown in Figure~\ref{fig:relred_theory}. While the effect of $H$ is visible, the influence of $\rho$ dominates, and is qualitatively similar for all $H \in [0,\tfrac{1}{2}]$. The boundary case $H = 0$ can also be simplified; for $H =0$ the relative error reduction evaluates to
\[\textsf{RelRed}_{\rho,0} = 1 -  3 \sqrt{\frac{1 - \rho^2}{9 - 8\rho^2}};\]
compare with \eqref{eq:relred_SABR}. A half-way reduction of the root-mean-squared error in the case $H = 0$, is achieved only at $\rho = \pm \sqrt{\frac{27}{28}} \approx \pm 0.98$.

\section{Numerical Results}\label{sec:numerics}
\subsection{Method}
To verify our results by simulation, we have implemented the SABR model and the calculation of the variance-optimal strategies in Python. For the simulation of the rough Bergomi model, we use the turbocharged Monte-Carlo scheme of \cite{mccrickerd2018turbocharging}, publicly available at \url{http://https://github.com/ryanmccrickerd/rough_bergomi}. As parameters for the SABR/rough Bergomi model we use
\[\eta = 0.5 \quad \text{and} \quad \alpha_0 = 0.4\]
and a flat term structure of initial forward variance is assumed in the rough Bergomi case. The parameters $\rho$ and $H$ are varied over a grid of 
\[\rho \in (-0.95,-0.9,-0.8,-0.6,0.0) \quad \text{and} \quad H \in (0.5, 0.35, 0.2, 0.1).\]
We have also explored positive values of $\rho$, with results that are similar (up to symmetry) to the results for negative $\rho$ and therefore not reported here. For the call options we use a time-to-maturity of one year and set today's stock price to $S_0 = 1$. The strike price is varied over a grid of 
\[K \in (0.6,0.8,1.0,1.25,1.66).\]
For each parameter combination we simulate $10.000$ paths on a time grid of $1000$ steps. We evaluate the pathwise hedging error for both the variance-optimal and the Delta strategy and then calculate all relevant summary statistics, such as the mean-square hedging error or the relative error reduction.

\begin{figure}
\centering
\includegraphics[width=0.95\textwidth]{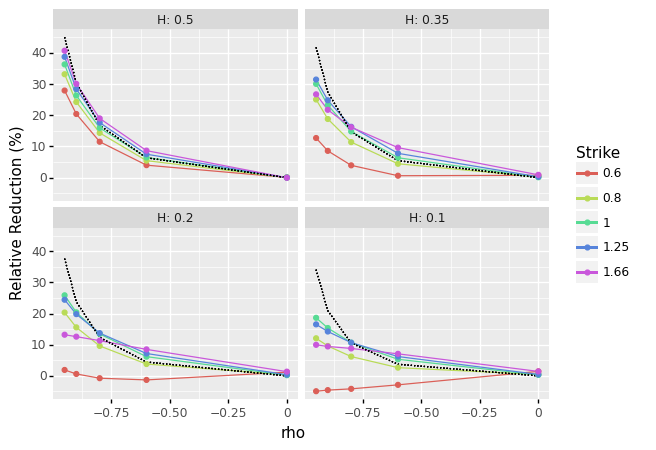}
\caption{\textbf{Relative Reduction in hedging error - simulation results.} The relative reduction in root-mean-squared hedging error for the variance-optimal strategy compared to the simple Delta hedging strategy is show in dependence on option strike price (colored lines), correlation $\rho$ and roughness parameter $H$. The black dotted line indicates the first-order approximation \eqref{eq:relred_rough}. \label{fig:relred}}
\end{figure}

\subsection{Observations}
We first focus on the empirical reduction in hedging error, which is shown in Figure~\ref{fig:relred}, together with the first-order approximation \eqref{eq:relred_rough}. For the SABR model ($H = 0.5$) and for the rough Bergomi model with large $H$, the first-order approximation matches well with the empirically observed error reduction. For smaller $H$, the approximation and the empirical error start to diverge, in particular for 
ITM options. Overall, the approximate variance-optimal hedge consistently leads to a reduction of the hedging error (in comparison to the Delta hedge), in particular for models with large $|\rho|$ and $H$ not too close to $0$. This is in line with our theoretical results from Sections~\ref{sec:SABR} and \ref{sec:rough}. Only for $H \le 0.2$ and for far ITM options a slightly negative reduction (i.e. an increase) of the empirical hedging error can be observed.\\

An even more detailed picture is painted by Figures~\ref{fig:densities_H05} and \ref{fig:densities_H02}, where we show estimated densities of the empirical pathwise hedging error for different combinations of $\rho$ and $K$. In particular for large $|\rho|$ it can be seen that the density of the variance-optimal strategies' error is narrower, more symmetric, and lighter-tailed then the density of the Delta strategies' error. The density of the Delta strategy error exhibits a skew that varies with strike $K$ and has heavier and asymmetric tails. However, already for the moderate values of $\rho = -0.6$ the effect is much diminished, in line with the theoretical results of Sections~\ref{sec:SABR} and \ref{sec:rough}, see also Figure~\ref{fig:relred_theory}. Comparing Figures~\ref{fig:densities_H05} and \ref{fig:densities_H02} it 
can also be seen that the described differences are more pronounces in the (non-rough) SABR case $H = 0.5$ than in the rough Bergomi case with $H = 0.2$. Again, this observation confirms our theoretical results.

\section{Conclusions}
In this article, we have derived analytic expressions for the variance-optimal hedging strategy and its mean-square error in the lognormal SABR and in the rough Bergomi model. For the SABR model, we find that the variance-optimal strategy coincides with Bartlett's Delta strategy from \cite{bartlett2006hedging}. Both theoretical results and simulations show that the variance-optimal strategy has lower hedging error than the implied Delta strategy, but this advantage only becomes substantial for strongly correlated models with large $|\rho|$. The relative efficiency of variance-optimal hedging is also affected by the roughness parameter $H$, with smaller $H$ (that is, increasing roughness) diminishing the advantage over the simple Delta hedge.\\
The general results of Section~\ref{sec:bg} on variance-optimal hedging in the `dynamic implied-volatilty' setting can likely be applied to other models beyond the lognormal SABR/rough Bergomi model, whenever a tractable expression for the evolution of implied volatility is available. In particular, in future work, we aim to generalize results to the SABR/rough Bergomi model with variable $\beta$, shedding further light on the stability of Bartlett's Delta with respect to variations in $\beta$, as discussed in \cite{hagan2017bartlett}. 

\begin{figure}
\centering
\includegraphics[width=0.95\textwidth]{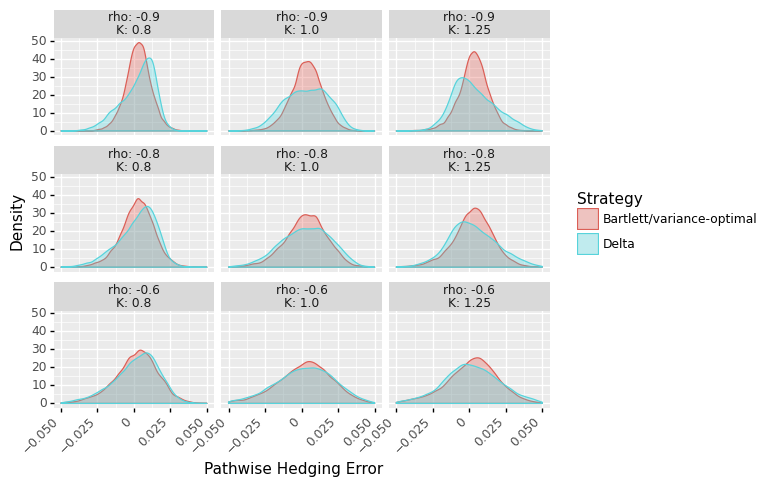} 
\caption{\textbf{Densities of pathwise hedging error in the SABR model.} 
This plot shows the simulated densities of the pathwise hedging error in the SABR model (Hurst parameter $H = 0.5$), comparing the Bartlett/variance-optimal strategy (red) and the Delta hedge (blue). Different combinations of correlation $\rho$ and option strike price $K$ are considered. \label{fig:densities_H05}}
\end{figure}

\begin{figure}
\centering
\includegraphics[width=0.95\textwidth]{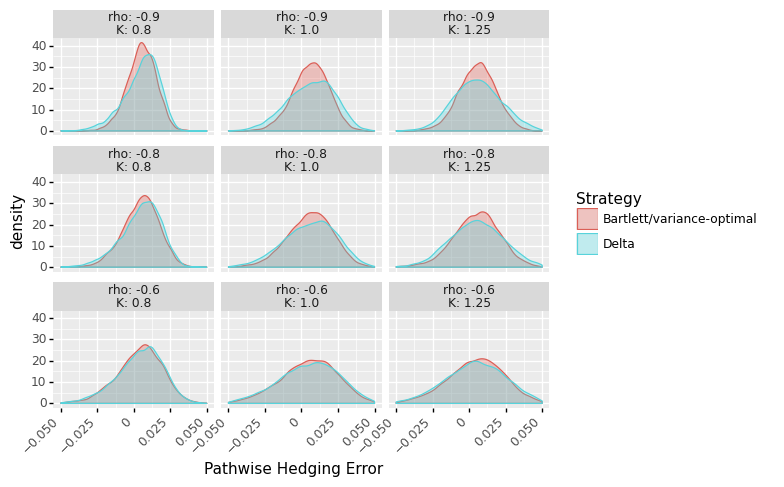} 
\caption{\textbf{Densities of pathwise hedging error in the rough Bergomi model ($H = 0.2$).} 
This plot shows the simulated densities of the pathwise hedging error in the rough Bergomi model (Hurst parameter $H = 0.2$), comparing the Bartlett/variance-optimal strategy (red) and the Delta hedge (blue). Different combinations of correlation $\rho$ and option strike price $K$ are considered.\label{fig:densities_H02} }
\end{figure}

\clearpage
%\section{Discussion}

%-------------------------------
%\bibliographystyle{imaiai}
\bibliographystyle{alpha}
\bibliography{references}
%-------------------------------

\pagebreak

\appendix

\section{Approximations for $f$, $F_1$ and $F_2$ in the rough Bergomi case}\label{app}
Following \cite{fukasawa2022rough} we set $G_H(y) = g(y)^2$ and report the closed-form expressions
\begin{align*}
G_{1/2}(y) &= 4 \left(\log \left(\frac{\sqrt{1 + \rho y + y^2/4} - \rho - y/2}{1- \rho}\right)\right)^2\\
G_{0}(y) &= \log\left(1 + 2\rho y + y^2\right) + \frac{2\rho}{\sqrt{1 - \rho^2}}
            \left(\arctan\left(\frac{\rho}{\sqrt{1 - \rho^2}}\right) - \arctan \left(\frac{y+\rho}{\sqrt{1 - \rho^2}}\right) \right)
\end{align*}
For $H \in (0,\tfrac{1}{2})$ we use the interpolation formula (cf.~\cite[Eq.(5.2)]{fukasawa2022rough})
\begin{equation}
G_H(y) = (2H + 1)^2 \left(c_0 G_0\left(\frac{y}{2H + 1}\right) + c_{1/2}  G_{1/2}\left(\frac{2y}{2H + 1}\right)\right),
\end{equation}
where
\[c_0 = \frac{3(1 - 2H)}{2H + 3}, \qquad c_{1/2} = \frac{2H}{2H + 3}.\]
For the derivatives, we obtain
\begin{align*}
G'_{1/2}(y) &= -8 \log \left(\frac{\sqrt{1 + \rho y + y^2/4} - \rho - y/2}{1- \rho}\right) / \sqrt{4\rho y + y^2 + 4}\\
G'_{0}(y) &= \frac{2y}{1 + 2\rho y + y^2}
\end{align*}
and 
\begin{equation}
G'_H(y) = (2H + 1)^2 \left(\frac{c_0}{2H+1} G'_0\left(\frac{y}{2H + 1}\right) + \frac{2 c_{1/2}}{2H+1} G'_{1/2}\left(\frac{2y}{2H + 1}\right)\right).
\end{equation}
Now, $f$, $F_1(y)$ and $F_2(y)$ can be calculated as
\begin{equation*}
f(y) = \frac{|y|}{\sqrt{G_H(y)}}, \qquad
F_1(y) = \sign{y} \cdot \frac{y^2}{2} \frac{G'_H(y)}{G_H(y)^{3/2}}, \qquad 
F_2(y) = \frac{2}{y} \left(F_1(y) - f(y)\right).
\end{equation*}
All functions are continuous at $y=0$ with values
\[f(0) = 1, \quad F_1(0) = 1, \quad F_2(0) = 0.\]

\end{document}